\DeclarePairedDelimiter\floor{\lfloor}{\rfloor}
\algnewcommand{\algorithmicand}{\textbf{ and }}
\algnewcommand{\algorithmicor}{\textbf{ or }}
\algnewcommand{\OR}{\algorithmicor}
\algnewcommand{\AND}{\algorithmicand}
\algnewcommand{\var}{\texttt}
\newtheorem{assumptions}{Assumption}
\newtheorem{thm}{Theorem}
\newtheorem{lemma}{Lemma}
\begin{document}
\title{Cluster-based Network Time Synchronization \\
  for Resilience with Energy Efficiency\\
}




\author{
	\IEEEauthorblockN{{Nitin Shivaraman$^{1}$}, {Patrick Schuster$^{2}$}, {Saravanan Ramanathan$^{1}$}, {Arvind Easwaran$^{3}$}, {Sebastian Steinhorst$^{2}$}}\\	
	\IEEEauthorblockA{$^{1}$TUMCREATE, Singapore,
		$^{2}$Technical University of Munich, Germany,
		$^{3}$Nanyang Technological University, Singapore}
    \normalsize {$^{1}$\{nitin.shivaraman, saravanan.ramanathan\}@tum-create.edu.sg, $^{2}$\{patrick.schuster, sebastian.steinhorst\}@tum.de, $^{3}$arvinde@ntu.edu.sg}}

%
%
%
%
%
%
\maketitle

\begin{abstract}
Time synchronization of devices in Internet-of-Things (IoT) networks is one of the challenging problems and a pre-requisite for the design of low-latency applications. Although many existing solutions have tried to address this problem, almost all solutions assume all the devices (nodes) in the network are faultless. Furthermore, these solutions exchange a large number of messages to achieve synchronization, leading to significant communication and energy overhead. To address these shortcomings, we propose \emph{C-sync}, a clustering-based decentralized time synchronization protocol that provides resilience against several types of faults with energy-efficient communication. C-sync achieves scalability by introducing multiple reference nodes in the network that restrict the maximum number of hops any node can have to its time source. The protocol is designed with a modular structure on the Contiki platform to allow application transitions. We evaluate C-sync on a real testbed that comprises over 40 Tmote Sky hardware nodes distributed across different levels in a building and show through experiments the fault resilience, energy efficiency, and scalability of the protocol. C-sync detects and isolates faults to a cluster and recovers quickly. The evaluation makes a qualitative comparison with state-of-the-art protocols and a quantitative comparison with a class of decentralized protocols (derived from GTSP) that provide synchronization with no/limited fault-tolerance. Results also show a reduction of 56.12\% and 75.75\% in power consumption in the worst-case and best-case scenarios, respectively, compared to GTSP, while achieving similar accuracy.
\end{abstract}

%
%

\section{Introduction}

Most Internet-of-Things (IoT) networks comprise resource-constrained, battery-operated nodes that are interconnected through a wired or wireless medium. 
Communication among these nodes forms the bulk of their operation to exchange information~\cite{Ganeriwal:2003:TPS:958491.958508}.
Real-time studies start with a time-synchronized network as the foundation for communication among the nodes. 
With limited communication during time synchronization, the information exchange among the nodes must be trustworthy.
Hence, a synchronization solution with fault resilience is necessary, in addition to maintaining accuracy and efficiency.
  
A time synchronization protocol needs to ensure stability in synchronization throughout the network. 
Wireless sensor networks are often plagued by error-prone nodes that can result in faults including, but not limited to, selective forwarding and tampered data (spikes, outliers, etc.)~\cite{fault_survey}.
These faults form a sub-class of byzantine faults observed in radio communication and could result in major deadline misses and power dissipation due to erroneous time information, leading to destabilization of the network. 
A faulty node with incorrect information can jeopardize the entire network if not addressed~\cite{7397831}.
Most importantly, synchronization protocols must be resilient to such faults and ensure functional correctness to minimize potential network downtimes.

 \begin{figure}
	\centering
	\includegraphics[width = 0.8\columnwidth]{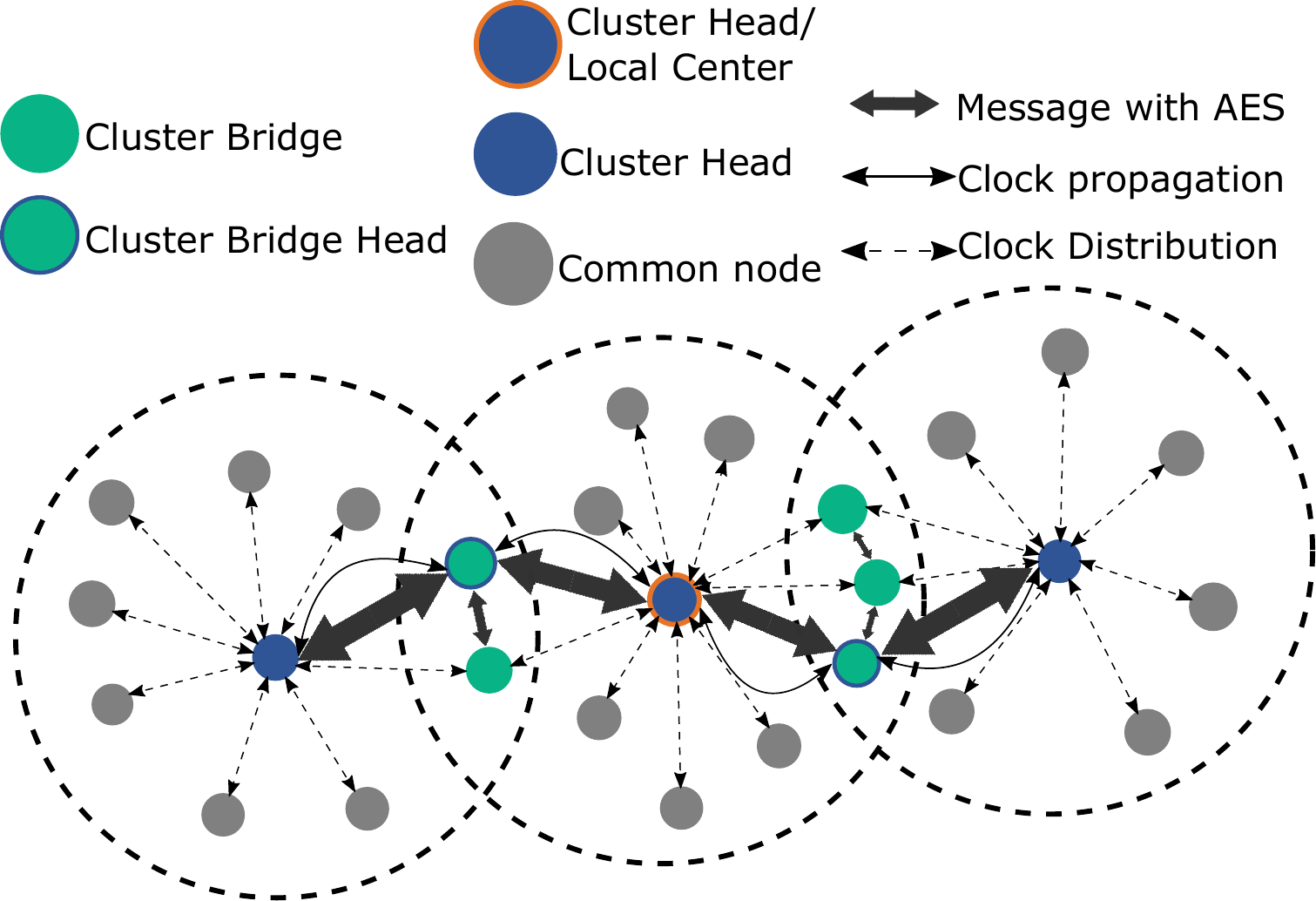}
	\caption{Time Synchronization in C-sync using a clustered architecture for power efficiency and fault resilience.}
	\label{general_nw}
\end{figure}

A faulty node with incorrect information could influence the network in existing synchronization solutions, albeit providing the basic deterrence against faults since there is no verification of the data. For example, a critical sensing application with distributed logging of events may be rendered useless if an event's timestamp is recorded differently by a few devices in the neighborhood of the event. Erroneous information disrupts the analysis and debugging of data. Hence, there is a need to integrate fault-tolerance while designing a synchronization protocol to ensure that the error from the time source/connecting node does not impact the entire network.

Some of the existing protocols handle node failures by switching to a different node to provide a reference clock~\cite{Maroti,6777583,5779066}.
Although reference node switching could work as a tentative solution, an adequate fault-handling mechanism is absent. For example, faults such as selective forwarding in a reference node could send all but critical messages resulting in desynchronization without any means to fix the issue. 
Generally, protocols reliant on a reference node require additional measures to mitigate the impact of a compromised reference node. 
In the absence of reference nodes, some protocols adopt a decentralized design that has inherent fault resilience~\cite{5211944,7504162,6926769,6520853}. Faulty nodes are excluded from information transmission during synchronization if their information is substantially different from those of other neighboring nodes. However, managing the faulty nodes in a decentralized network exponentially increases the messages exchanged and consequently, the power consumption.



Typically, time synchronization protocols focus on three primary goals: energy efficiency, accuracy and scalability. To achieve energy efficiency, these solutions minimize communication (radio ``ON" time) such that the least number of messages are exchanged to achieve and maintain synchronization. Synchronization accuracy varies from a few seconds~\cite{103043} to a few microseconds~\cite{5779066,6777583,5211944}, depending on the type of protocol and application used. For sensors in real-time systems, the expected accuracy is in the order of microseconds~\cite{Ganeriwal:2003:TPS:958491.958508}. 
Typically, achievable accuracy is bounded by the resolution and the stability of the clock used by the node. 
Scalability ensures growth in network size does not impact the functionality and the performance of the protocol. 

In this paper, we propose a decentralized clustering-based time synchronization protocol as shown in Figure~\ref{general_nw} (referred to as \emph{C-sync}) to ensure fault resilience of the network in addition to the standard three metrics.  
A representative node from each cluster - Cluster Head (CH) is connected to other clusters through a few Cluster Bridge (CB) nodes for information transmission. 
As CH and CB nodes wield greater influence on the propagated information across clusters, we design a consensus mechanism among the nodes of the neighborhood to verify the information correctness and integrity from CH/CB.
More information on the fault model and the fault handling in C-sync is explained in Section~\ref{fault_model}.

In C-sync, a concept called Local Centers (LC) is introduced to handle scalability, where some CH nodes within the network are elected as reference nodes.
These nodes coordinate the distribution of time information such that the synchronization error is limited by restricting the number of hops between LCs and other nodes.
The resilient design of C-sync coupled with low power consumption and scalability enables design of real-time applications in decentralized systems.
LCs are further explained in detail in Section~\ref{consensus}. 
To summarize, the contributions of this paper are as follows:
\begin{enumerate}
	\item We propose C-sync, a decentralized fault-resilient clustering-based time synchronization protocol suited for large-scale IoT networks. C-sync introduces multiple time sources in the network to constrain the error between any node and its reference. 
	\item We show that the proposed protocol achieves synchronization with significantly lower power consumption while ensuring accuracy is not compromised.
	\item Through extensive experiments on a real testbed and theoretical analysis, we show the fault recovery mechanism of C-sync and show the performance with power efficiency and synchronization accuracy.
\end{enumerate}

\paragraph{Organization}

A detailed description of the C-sync protocol is provided in Section~\ref{protocol}.
Section~\ref{fault_model} outlines the considered fault model and the fault-handling mechanism of C-sync.
The experimental setup and the experimental results are discussed in Section~\ref{exp}.
A comparison of the existing time synchronization solutions is presented in Section~\ref{literature}.
Section~\ref{conclusions} concludes this paper with some future research directions.




\section{C-Sync}\label{protocol}
In this section, we introduce our protocol C-sync and discuss the synchronization mechanism. 
The pseudo-code of C-sync protocol operations is shown in Algorithms~\ref{alg:csync1} and~\ref{alg:csync2}.

\begin{figure}[ht]
	\centering  
	\includegraphics[width=0.95\linewidth]{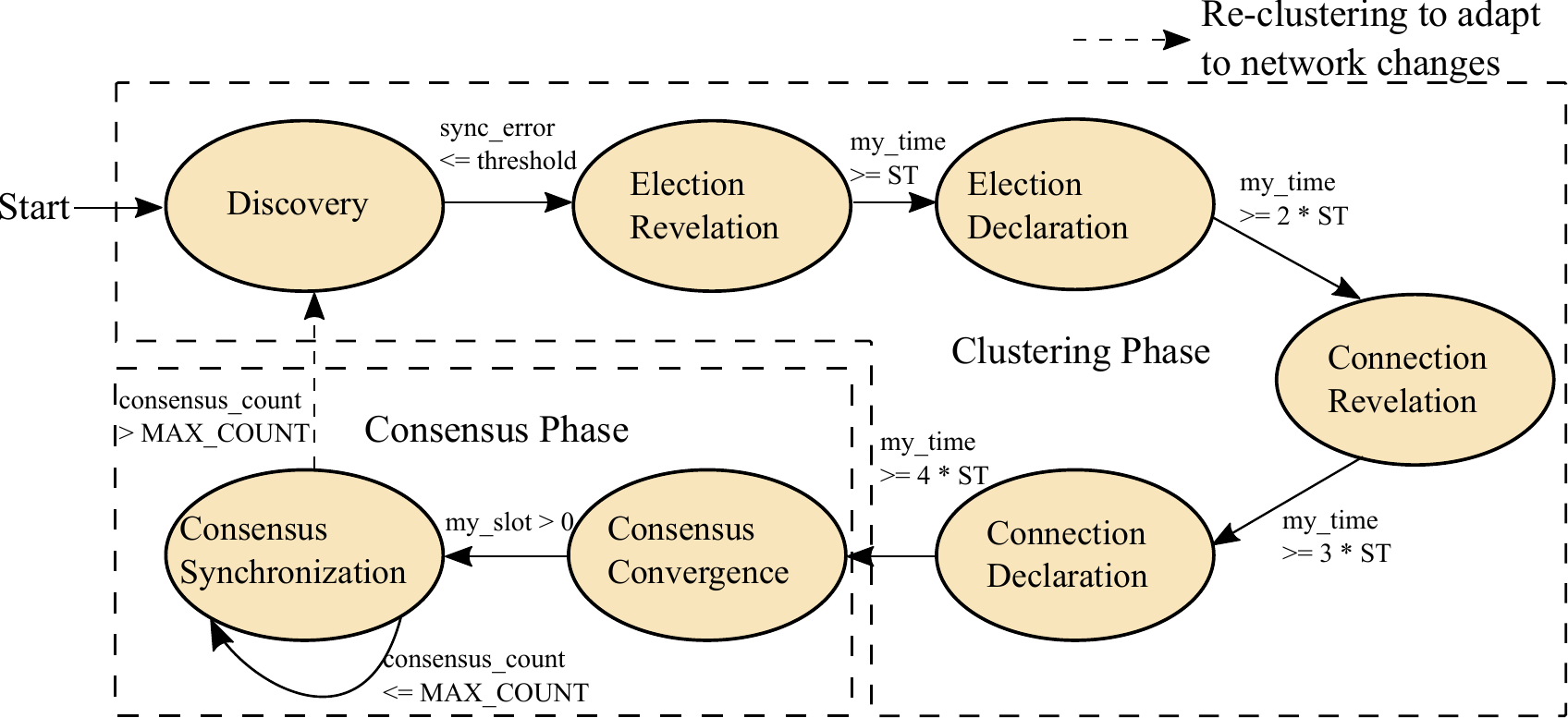}
	\caption{The two phases of the C-sync protocol represented as a state machine.}
	\label{fig:state_machine}
\end{figure}

\subsection{Background}

The C-sync protocol follows a 2-phase process as described in the state machine in Figure~\ref{fig:state_machine}.
Clustering is the first phase of C-sync derived from the existing clustering scheme DeCoRIC \cite{9209755}. 
DeCoRIC uses a clustering scheme based on the degree of a node, \emph{i.e.,} the node with the highest degree forms the representative CH node that facilitates routing information within and across clusters.
The degree of a node is the number of active communication links of a node with its neighbors.
DeCoRIC is adapted to the synchronization process to establish the underlying architecture in the network.
The state machine from DeCoRIC is transformed with additional states to integrate scheduling between the states. 
Once the clusters are formed, the consensus phase maintains synchronization among the clusters. 
Since we assume an ad-hoc network with the same wireless channel, any node that joins the network in the midst of the state machine operation is able to join the nearest cluster and associate its time to that of the CH.
The new node may become a CH/CB node depending on its position in the subsequent clustering phase.

\subsection{Clustering}\label{clustering}
The clustering phase comprises five states to establish the clustered architecture among the nodes.
\paragraph{Discovery state.}
In the discovery state, each node broadcasts messages with the time information derived from its hardware clock and listens to other nodes in the neighborhood to discover its environment.
In this state, the logical (\emph{i.e.,} global) clock is the same as the hardware clock and the degree of every node is set to $0$.
The logical clock $L(t)$ and its parameters (rate and offset) is derived from the hardware clock value $h(t)$ compensated by the rate parameter and an offset value as in~\cite{5211944}:
\begin{equation}\label{eq:abs_log_clk}
L(t) = \int_{\tau=0}^{t} h(\tau)l(\tau) \,d\tau + \theta(t),
\end{equation}
where $l(\tau)$ is the logical clock rate relative to hardware clock (\emph{logical clock rate} in short) and $\theta(t)$ is the logical clock offset. The logical clock rate is the average of relative rates to all its neighbors, whereas the logical offset is the average of relative offsets to all its neighbors.

For a node $i$ with $n_i$ neighbors, the clock rate $l_i$ and offset $\theta_i$ is computed using

\begin{equation}\label{eq:rel_clk_rate}
l_i(t+1) = \frac{\sum_{j \in n_i} l_j(t) + l_i(t)}{n_i + 1} ~~~~~~~~\mathrm{and}
\end{equation}

\begin{equation}\label{eq:rel_clk_offset}
\theta_i(t+1) =  \theta_i(t) + \frac{\sum_{j \in n_i} L_j(t) - L_i(t)}{n_i + 1}.
\end{equation}


Upon reception of a message from a new neighbor, the receiving node increments its degree and stores the time information (rate and offset) from both logical and hardware clocks of the sender.
All the messages in this state are asynchronous since there is no reference clock.
The MAC layer uses CSMA-CA with message timestamps to minimize the collision and interference in the network as the nodes are asynchronous at the start.
The logical clock rate and offset of a node are derived from averaged relative clock rates and offsets of its neighbors.
Since all nodes perform averaging for the offset and rate, the nodes are loosely synchronized.
The average network delay gets factored into the offset as the logical clock offset gets calculated independently at each node based on the neighbor offsets.
If a node can achieve both the offset and rate of its logical clock within a pre-defined threshold, that node creates a reference time (delay) for a state transition to the election revelation state for all nodes in the neighborhood.
The threshold is set based on worst-case CSMA backoff to ensure that nodes are able to receive messages even if they are loosely synchronized.
The threshold also allows nodes with large offsets to jump to the transition interval based on a synchronized neighbor, preventing long convergence times.
Lines 1-18 of Algorithm~\ref{alg:csync1} show the operation of the discovery state and the transition to the election revelation state.

In the event of a late-discovered node due to the dynamic nature (new/mobile nodes) of the network, the new node listens to messages from the neighboring nodes and associates itself with the nearest CH node as a Common node (CM) of its cluster.
Upon receiving the state transition announcement, the CM node directly jumps to the current state of the network.
However, it is important to note that the subsequent discovery state may elevate the status of the CM node to a CB or a CH depending on its position in the network to form more optimal clusters.

\paragraph{Election revelation state.}
The election revelation state facilitates the exchange of degree information from all nodes for the election of Cluster Heads (CHs). 
The configuration could be changed to use residual energy of the nodes to ensure uniform distribution of energy among the cluster nodes.
CH nodes retain the CM nodes in sleep mode except during transmission, thereby significantly reducing energy.
Beyond the creation of reference time from the previous state, all nodes transition to further states at a pre-defined interval of time called the \emph{state transition} (ST) interval.
The ST interval is a combination of the error threshold used in the discovery phase and frame length to ensure reduction in the wireless channel interference and a successful message transmission.
All nodes transition to the election declaration state after the ST interval.

Note that the slotted ST intervals are based only on the synchronized time without any dependency on the channel parameters.
Additionally, the ST intervals are generated based on the hardware timers of a node without any dependency on software time slots.
Although integrating channel information would allow the use of C-sync in Time Slotted Channel Hopping (TSCH) networks, we focus on time synchronization on the same channel for all nodes. Use of the same channel enables a plug-and-play interface with greater network flexibility in contrast to configuring every node.

\paragraph{Election declaration state.}
Nodes with the highest degree declare their CH status and form clusters.
If a node receives a message with a higher degree from its neighbor, it cancels its broadcast and associates itself to the CH node as a common node (CM).
Operations of election revelation and election declaration states are shown in lines 19-31 of Algorithm~\ref{alg:csync1}.

\paragraph{Connection revelation state.}
Connectivity among the clusters is established through the election of \emph{Cluster Bridge} (CB) nodes that connect two or more CH nodes.
CB nodes play an important role to ensure fault-free dissemination of messages during the consensus phase.
Hence, to authenticate and prevent nodes to falsely declare themselves as CB, the messages in this state use AES encryption with the IDs of the CH nodes as the key as shown in Figure~\ref{general_nw}.
This encryption is in addition to the already encrypted messages of IEEE 802.15.4 standard to ensure the authenticity of CB nodes~\cite{1700009}.
The ST interval transitions the network to the connection declaration state.

\setlength{\dbltextfloatsep}{0pt}
\begin{algorithm}
	\scalebox{0.90}{
		\begin{minipage}{0.98\columnwidth}
			\caption{Pseudo-code representing the operation of Clustering phase in C-sync.}
			\label{alg:csync1}
			\begin{algorithmic}[1]
				\State INITIALIZE $\var{Neighbors} \gets \emptyset$ 
				\State $\var{my\_state} \gets \var{DISCOVERY}$
				\State $\var{Msg.Id} \gets \var{my\_addr}; \var{Msg.time} \gets \var{my\_time}$
				\State $\var{my\_CH\_count} \gets 0; \var{consensus\_count} \gets 0$
				\State $broadcast(\var{Msg})$
				\If{$rcv()$}
				\For {$\var{n}$ in $\var{Neighbor.list()}$}
				\State $\var{n.time} \gets rcv().\var{time}$ 
				\If {$\var{n}.synced()$}
				\State $\var{my\_time} \gets \var{n.time}$
				\State $\var{my\_state} \gets \var{ELECTION\_REVELATION}$
				\EndIf
				\EndFor
				\If{$\var{n} \notin \var{Neighbor.list()}$}
				\State $\var{my\_degree} \gets \var{my\_degree} + 1$ 
				\State $\var{Neighbor.list()}.Append(n)$
				\EndIf
				\EndIf
				\If {$(\var{my\_time} \ge ST)$ \AND $(rcv())$}
				\State $\var{Neighbor.degree} \gets rcv().\var{degree}$
				\State $\var{my\_state} \gets \var{ELECTION\_DECLARATION}$
				\EndIf
				\If {$(\var{my\_time} \ge 2 \cdot ST)$ \AND $(rcv())$}
				\If {$(\var{my\_degree} > rcv().\var{degree})$ \OR $(\var{my\_degree} == rcv().\var{degree})$ \AND $(\var{my\_addr} > rcv().\var{addr})$}
				\State $\var{my\_role} \gets CH$
				\Else
				\State $\var{my\_CH\_count} \gets \var{my\_CH\_count} + 1$
				\State $\var{my\_role} \gets CM$
				\EndIf
				\State $\var{my\_state} \gets \var{CONNECTION\_REVELATION}$
				\EndIf
				\If {$(\var{my\_time} \ge 3 \cdot ST)$ \AND $(rcv())$}
				\For {$\var{n}$ in $\var{Neighbor.list()}$}
				\If {$(\var{n.role} == CH)$ \AND $(\var{my\_CH\_count} \le 2)$}
				\State $\var{my\_CH\_count} \gets \var{my\_CH\_count} + 1$
				\State $\var{my\_CH\_list} \gets \var{n}$
				\Else
				\State $\var{my\_role} \gets CB$
				\EndIf
				\EndFor
				\EndIf
				\State $\var{my\_state} \gets \var{CONNECTION\_DECLARATION}$
				\If {$(\var{my\_time} \ge 4 \cdot ST)$ \AND $(rcv())$}
				\If { $(\var{my\_degree} > rcv().\var{degree})$ \OR $(\var{my\_degree} == rcv().\var{degree})$ \AND $(\var{my\_addr} > rcv().\var{addr})$}
				\State $\var{my\_role} \gets CBH$
				\EndIf
				\State $\var{my\_state} \gets \var{CONSENSUS\_CONVERGENCE}$
				\EndIf
				\Comment{//End of clustering}
			\end{algorithmic}
	\end{minipage}}
\end{algorithm}

\paragraph{Connection declaration state.}
Similar to the election declaration, CB nodes (CM nodes with multiple CH connections) declare their status to the CH nodes in the connection declaration state. 
As with the CH election, CBs with the highest degree or highest address (in case of the same degree) declare themselves as the representative Cluster Bridge Head (CBH) and disclose their neighboring CH nodes.
The CH nodes discover their neighboring CH nodes through their CB(s). 
The steps involved in connection revelation and connection declaration states are shown in lines 32-48 of Algorithm~\ref{alg:csync1} with a transition to the consensus phase after the ST interval.

It is important to note that the consensus phase starts with a new reference time (common across the network) and less contention on the network as message transmission is restricted only to CHs and CBs to finalize the time slots.
The CM nodes listen to the information from CH transmissions and update their clocks.
All nodes transmit in the discovery phase continuously while only the CH and CB nodes transmit beyond the election phase in a time-slotted manner as the synchronization among nodes improves along the state transitions.
In contrast, most flooding-based protocols experience channel contention among nodes leading to significant packet and energy losses.

The topology/status of all nodes (CH/CB/CM) of the clustered architecture remains intact for the consensus phases. 
However, any network changes (node mobility or unstable communication links) are reflected in the subsequent discovery phase, where different nodes may get elected as CH/CB depending on the change.
The discovery phase gets triggered after the configured number of repetitions (MAX\_COUNT) of the consensus synchronization phase are complete. The configuration is set based on the frequency of changes in the network, \emph{i.e.,} dynamic networks with higher node failures/mobility have shorter repetitions and vice-versa.
New nodes become CMs and listen to the nearest CH for the time information while missing nodes are treated as fail-stop faults and handled according to the steps discussed in Section~\ref{fault_model}.
Exclusive discussion on the mobility of nodes is beyond the scope of this paper.



\begin{algorithm}
	\scalebox{0.90}{
		\begin{minipage}{0.98\columnwidth}
			\caption{Pseudo-code representing the operation of Consensus phase in C-sync.}
			\label{alg:csync2}
			\begin{algorithmic}[1]
				\For {$\var{slot}$ in $MAX\_SLOTS$}
				\For {$\var{CH\_neighb}$ in $\var{CH\_neighbors}$}
				\If{$\var{my\_CH\_degree} \le \var{CH\_neighb}$}
				\State $\var{my\_slot} \gets \var{slot}$
				\ElsIf {$\var{slot} == MAX\_SLOTS$}
				\State $\var{my\_slot} \gets MAX\_SLOTS$
				\EndIf
				\State $\var{my\_state} \gets \var{CONSENSUS\_SYNCHRONIZA-}$\
				 $\var{-TION}$
				\EndFor
				\EndFor
				\State $\var{my\_slot} = MAX\_SLOTS - \var{my\_slot}$
				\While {$\var{consensus\_count} \le \var{MAX\_COUNT}$}
				\For {$\var{slot}$ in $MAX\_SLOTS+1:2\cdot MAX\_SLOTS$}
				\If {$\var{my\_slot} == \var{slot}$}
				\State $ increment(\var{consensus\_count})$ 
				\State $\var{my\_time} \gets rcv().\var{time}$
				\If {$\var{my\_role} == CB$ \OR $CH$}
				\State $broadcast(\var{Msg})$
				\EndIf
				\EndIf
				\EndFor
				\EndWhile
				\State $\var{my\_state} \gets \var{DISCOVERY}$
				\Comment{//End of consensus}
			\end{algorithmic}
		\end{minipage}}
	\end{algorithm}

\subsection{Consensus}\label{consensus}
The consensus phase is a stable repetitive phase after clustering for maintenance of synchronization and the clustered architecture.
There are two states in this phase: consensus convergence and consensus synchronization.

\paragraph{Consensus convergence state.}
In this state, a pre-configured number of time slots (MAX\_SLOTS) are available for the CH nodes to transmit their messages. 
The slot of a CH is decided based on the number of its CH neighbors obtained in the connection declaration state.
For example, a CH with only one CH neighbor node transmits first, followed by CHs with two neighboring CH nodes and so on.
Generally, CH nodes with a lower number of CH neighbors (one or two CH neighbors) tend to be at the edge of the network while CH nodes with a higher number of CH neighbors (two or more CH neighbors) are located towards the center of the network.
This fact is exploited by our protocol to find one or more local center (LC) nodes depending on the size of the network that can act as a time source to synchronize different parts of the network, leading to localized time distribution.
LCs are CH nodes that have the highest neighboring CH connections and are typically located towards the center of the network.
Lines 1-11 describe the consensus convergence state in Algorithm~\ref{alg:csync2}.

The number of time slots is proportional to the number of hops a node at the edge of the network traverses to an LC node.
In the case of special network topologies like a chain or a ring, where most nodes have the same set of CH neighbors, the configurable time slots limit the number of hops to reach the LC node.
If a CB receives a message from a CH, it acknowledges the message confirming its slot.
Any CB that receives two different time slots from neighboring CHs chooses the higher slot number to acknowledge by convention.
Similarly, a CH node updates its time slot to a higher value if its neighbor slot is higher than its initial slot (in the case of chain/ring topologies).
The transmission continues until all the CH nodes have a confirmed time slot. 
Similar to slot selection, if there are multiple nodes with the same number of neighboring CH and time slots, the node with a higher ID is chosen as the LC.
Multiple LCs are found depending on the size of the network and these nodes disperse the time information back to the CH nodes.


A notable caveat is that the minimization of hops is adaptive to the network and the topology, \emph{i.e.,} a network could have multiple LCs (within a distance of 1-2 hops) or a single LC (configured maximum hops).
This novel method of limiting the number of hops to the time source achieves a simplistic solution eliminating the requirement of any additional hardware/timing adjustment. 
Although existing solutions are able to limit the error significantly, it is functional only up to a certain number of hops and the problem repeats upon further scaling the network.

\paragraph{Consensus synchronization state.}
In this state, LC transmits time information to the CH nodes and further to the CM nodes of their respective clusters.
The time slots for this state are the modulo time slot proportionate to the pre-configured time slots, \emph{i.e.,} if a CH node transmitted at slot 4 in a 10-time slot window during convergence, it receives its time information at slot 6 (10-4) during synchronization.
CM nodes are awake to receive their slot numbers from CH (same as their associated CH's slot) and go into sleep mode.
Since the time information is passed from the local center, the clock rate and offset are updated relative to the LC. 
This operation is shown in lines 12-23 of Algorithm~\ref{alg:csync2}.

Synchronization errors accumulate at every node starting from the LC until the CM nodes along the path. 
As practical clocks have variations in both offset and drift, both parameters need to be compensated.
The logical clock rate of a receiving node (r), from Equation~\eqref{eq:abs_log_clk}, is defined as the ratio of a logical clock (global clock value) of a node to its hardware clock after offset compensation and is given by:
\begin{equation}
l_r(t) = L_r(t)/h_r(t).
\end{equation}

Since the rate is dependent on the hardware clocks of each node along the multi-hop path, it is important to adjust the rate only to the logical clock of the LC as the reference clock. 
The relative clock rate of receiving node (r) relative to the sending node (s) is the ratio of the logical clock of the sender ($L_s(t)$) to the hardware clock of the receiver ($h_r(t)$), given as:
\begin{equation}
l_{rs}(t) = \frac{L_s(t)}{h_r(t)}.
\end{equation}

If a node is directly connected to the LC, the relative rate would be sufficient to compute the logical clock.
For non-direct neighbors, we compute the logical clock rate with reference to LC $(l_{LC})$ as the ratio of the relative rate to the current clock rate of the receiver:

\begin{equation}
l_{LC} = l_{rs}(t)/l_r(t).
\end{equation}




CH nodes turn on only in their respective slots while CB nodes remain active for two slots to receive and send information to their CH neighbors respectively.
The clock rates of two LCs are averaged to establish a uniform clock synchronization across the network if a CH node in any path is also an LC.
The nodes move into an idle phase after synchronization where they are in sleep mode and no messages are being exchanged.
C-sync switches to consensus synchronization periodically for LC to distribute time information for maintaining synchronization.


\paragraph{C-sync Overhead}
Existing synchronization solutions have an overhead to cover the entire network diameter with a lack of backup mechanisms to handle a dynamically changing network. 
On the contrary, the diameter in C-sync ranges from a single cluster to distance (hops) to the LC, yielding a much lower overhead. 
C-sync employs re-clustering to ensure an efficient clustered architecture and has a fault detection and correction mechanism in place to handle changes in the network.
Additionally, since the protocol starts with a completely decentralized network, the computational complexity of each node of the protocol is $O(n)$.

\begin{lemma}\label{bounded_error_defn}
	The maximum synchronization error between any node to its nearest LC is a parametric value.
\end{lemma}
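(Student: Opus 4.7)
The plan is to decompose the end-to-end synchronization error along the path from an arbitrary node to its nearest LC into per-hop contributions, then bound the number of hops by a protocol-configured parameter. Because the LC-to-CH distribution happens during the consensus synchronization state along the same tree that was used in consensus convergence, every node in the network sits on a path of bounded length rooted at some LC, so a hop-count bound combined with a per-hop error bound will yield the stated parametric bound.

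First, I would fix a per-hop error constant. From Equations~\eqref{eq:abs_log_clk}--\eqref{eq:rel_clk_offset}, each receiver updates its logical clock using the sender's timestamped message, so the residual error at a single hop is governed by (i) the hardware-clock resolution and short-term drift of the two endpoints between the sender's timestamping instant and the receiver's timestamping instant, and (ii) the uncertainty in MAC-layer timestamping. Both quantities are properties of the Tmote Sky hardware and the Contiki MAC stack, not of the network size, so I can abstract them into a constant $\varepsilon_{\mathrm{hop}}$ that upper-bounds the error a receiver inherits on top of its direct neighbor's error. The rate-correction formulas $l_r(t) = L_r(t)/h_r(t)$ and $l_{LC} = l_{rs}(t)/l_r(t)$ ensure that the relative rate is re-anchored to the LC at every hop, so drift between synchronization rounds contributes only to $\varepsilon_{\mathrm{hop}}$ and does not compound multiplicatively across hops.

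Next I would bound the number of hops. The consensus convergence state allocates \texttt{MAX\_SLOTS} slots, and a CH's slot index is tied to the number of CH-hops between it and an LC: slot~$1$ is used by CHs with a single CH neighbor (edges of the network) and higher slots by CHs closer to the center, with LCs occupying the terminal slot. Thus, by construction of the slot-assignment rule in lines~1--11 of Algorithm~\ref{alg:csync2}, every CH lies within \texttt{MAX\_SLOTS}~$-1$ CH-hops of some LC, and any CM is at most one additional hop from its CH. Hence the path length from any node to its nearest LC is at most $\texttt{MAX\_SLOTS}$. Telescoping the per-hop bound along this path yields a worst-case error of $\texttt{MAX\_SLOTS}\cdot\varepsilon_{\mathrm{hop}}$, which is indeed parametric in the user-configured \texttt{MAX\_SLOTS} and the hardware constant $\varepsilon_{\mathrm{hop}}$.

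The main obstacle I anticipate is the special-topology case (chains and rings) where most CHs share the same CH-neighbor count, because naively every CH could claim the same slot and the hop-to-LC distance would no longer be implied by the slot index. The excerpt addresses this by letting a CH raise its slot to exceed any higher-slot neighbor, and by letting CBs acknowledge only the higher of two conflicting slots; I would formalize this as a monotonicity argument showing that after the convergence state stabilizes, slot indices along any path to an LC are strictly increasing, so the \texttt{MAX\_SLOTS} cap still translates directly into a hop cap. A secondary subtlety is that two LCs may meet at a shared CH whose rate is averaged; I would argue this averaging can only reduce the maximum deviation relative to either LC and therefore does not break the bound.
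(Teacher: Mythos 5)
Your proposal follows essentially the same route as the paper's proof: dismiss propagation delay via MAC-layer timestamping, bound the error contributed per hop, and multiply by the hop count to the nearest LC, which is itself capped by the configured number of slots, giving exactly the paper's parametric bound $\eta \cdot \tau \cdot \delta$. The only difference is cosmetic: the paper factors your per-hop constant $\varepsilon_{\mathrm{hop}}$ explicitly as $\tau \cdot \delta$ (idle-time delay between consecutive synchronization messages times the minimum achievable error $\delta$), whereas you fold the inter-round drift into a single lumped constant and spend more effort justifying the hop cap from the slot-assignment rule, which the paper simply takes from the protocol description.
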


\begin{proof}[\unskip\nopunct]
The synchronization error in most time synchronization protocols is dependent on the propagation time, frequency of messages exchanged and the number of hops required to communicate.
Due to MAC-layer time-stamping, the propagation time can be safely ignored assuming no channel contention and interference.
This is because the propagation delay roughly amounts to \SI{0.3}{\micro\second} for 100m distance between the nodes while the resolution of hardware timer is about \SI{1.9}{\micro\second}.
It is important to note that this delay applies to the consensus synchronization state.
If the delay between two consecutive messages exchanged is $\tau$ and the minimum achievable synchronization error for an ideal "zero"-delay is $\delta$, the accumulated error due to delay in message exchange is $\tau \cdot \delta$.
In C-sync, $\tau$ is the idle-time delay between two synchronization messages, and the maximum number of hops to an LC is represented by $\eta$.
As the synchronization error increases at each hop, the total error from any node to its LC is given by $\eta \cdot \tau \cdot \delta$.
This parametric limit restricts the synchronization error for any node in the network to its LC.
\end{proof}

\section{Resilience in C-sync}\label{fault_model}
In this section, we categorize the faults and list the assumptions made by C-sync for fault handling. 
Further, we prove by induction that any fault in our described categories can be handled by C-sync if the assumptions are met. 



\paragraph{Fault Model}In this paper, we consider two types of faults: fail-stop faults and a subset of byzantine faults. 
With a fail-stop fault, the node is non-responsive due to battery exhaustion, hardware/software damage and/or environmental factors, etc.
We also consider byzantine faulty nodes where the nodes could behave erratically making it difficult and expensive in terms of communication to detect them. 
The subset of byzantine faults considered in this paper includes spikes, outliers and intermittent communication faults~\cite{fault_survey,6489879,KARLOF2003293}.
A Spike fault is a sudden surge in reported values that may or may not subsequently return to normal values.
When the reported values are beyond the boundary of the expected values, the resulting fault is an outlier.
With intermittent faults, the message transmission from a node is sporadic with periods of inactivity.

Typically, in cluster-based network architectures, the faults described above can be translated as selective forwarding, discovery flooding and altered information~\cite{KARLOF2003293}.
Fail-stop faults and intermittent communication mimic a potential temporary loss of communication resulting in selective forwarding.
Additionally, nodes may send sudden variation (spikes, outliers) in the information due to a fault (E.g. False perception of the environment, routing changes, etc. ) leading to altered information. 
A threshold on the acceptable range for the received data can prevent the altered information faults such as spikes and outliers. 
A node with a faulty radio could end up in a high transmit power resulting in discovery floods (also called HELLO floods). 
The high-powered transmission could lead to the false election of these nodes as one of the critical routing nodes (CB or CH).
For discovery floods, it is important to verify the bi-directionality of the links between the nodes, \emph{i.e.,} to verify that the link has the same properties in both directions. 
The above faults are the observable faults through radio communication and constitute a subset of the generic byzantine faults.
Additionally, we assume that the faulty nodes can cooperate with each other independent of their node type.
The impact of the faults observed in both phases of C-sync described in Section~\ref{protocol} is studied.




\paragraph{Impact of faults} 
A faulty node can report an erroneous degree and/or use discovery flooding with a high-powered (damaged) radio to become a CH/CB node during the clustering phase.
As a routing node, it could have a wider impact during the consensus phase transmitting erroneous time information across clusters.
Since CB plays a critical role in the fault detection and correction process, an additional authentication using AES cipher is used with a combination of communicating CH nodes' ID as the key to prevent non-neighboring nodes of CHs from getting promoted to a CB.
Replicating a MAC address of 8 bytes in the ID through brute force is a highly difficult and energy-intensive task for a resource-constrained node.
The AES cipher is used atop the existing message encryption mechanism of IEEE 802.15.4~\cite{1700009} to authenticate the CB nodes and verify the bi-directionality of the links~\cite{KARLOF2003293}.
Authentication of CB nodes reduces the number of nodes participating in the byzantine consensus (agreement) among the correct nodes, yielding significant energy savings~\cite{yin2003separating}.
All the other faulty nodes (CH/CB/CM) are detected and corrected in the consensus phase using the Byzantine consensus mechanism. 

\subsection{Assumptions}
Let $n_i$ denote the number of neighbors of any node $i$ in a cluster within its communication range, $\mathrm{n_{CB}}$ denotes the set of CB nodes between two clusters with cluster heads CH1 and CH2 such that CH1, ${n_{CB}}$ and CH2 are at an increasing number of hops from the local center respectively. The assumptions and broadcast primitives required for byzantine consensus are:
\begin{assumptions}\label{assum:no_fake_addr}
	 No node can fake its address or the reference address in a message as it is a hardware-based MAC address. 
\end{assumptions}
\begin{assumptions}\label{assum:min_neighbors}
	Every node in a cluster has at least $\floor{\frac{n_i}{2}} + 1$ neighbors that are fault-free.
\end{assumptions}
\begin{assumptions}\label{assum:min_CBs}
	There are at least $\floor{\frac{n_{CB}}{2}} + 1$ CB nodes that are fault-free between any two clusters. 
\end{assumptions}
\begin{assumptions}\label{assum:min_cluster}
	There are at least two clusters connected by $\floor{\frac{n_{CB}}{2}} + 1$ CB nodes without any network partitions/isolated clusters.
\end{assumptions}

\subsection*{Atomic Broadcasts}
Typical byzantine agreements require significant energy to perform computations and communication. Sensor nodes are resource-constrained and require an energy-efficient way to achieve a byzantine agreement.
Atomic broadcasts were introduced in~\cite{CRISTIAN1995158} to achieve byzantine consensus if they meet the following criteria:
\begin{itemize}
	\item Every message from a correct sender is received by all correct receivers within a time-bound.
	\item Every message is received by the correct receiver in the same order as it was sent by the correct sender.
\end{itemize}
The byzantine agreement is concluded if all the correct nodes have the correct information that was propagated.

In C-sync, atomic broadcasts are initiated by the CB nodes if the CBH or CH nodes are non-responsive or send incorrect information.
To understand further, we take a look at both types of faults and present the fault handling mechanisms.

\subsection{Fault detection and correction}\label{sec:fault_handling}
With C-sync, it is expected that the CH and CBH nodes' synchronization messages are transmitted at a scheduled time slot. 
Any missing broadcast with selective forwarding or fault time information due to byzantine fault is handled through byzantine consensus among nodes of the cluster.

All CBH and CB nodes schedule a byzantine consensus message at a delay of two message transmissions time after the transmission of the synchronization message and monitor the information within the cluster to verify its authenticity. 
Two transmission times account for message transmission from CBH to CH and, further, from the CH to all the cluster nodes.
The CB nodes drop the scheduled message upon correct information from CBH and CH nodes in their respective time slots, while a missed/incorrect message triggers the transmission of the byzantine consensus message as per schedule.
Note that a consensus message can be configured to trigger after a certain number of repetitive misses/violations.
The detected faulty node is added to a blacklist where messages originating from the blacklisted nodes are ignored by the other nodes and are excluded from forming new clusters.
Upon receiving the atomic broadcast, all the CM nodes and/or CH nodes (non-faulty) re-transmit this information till every node in the cluster receives at least $\floor{\frac{n_i}{2}} + 1$ correct messages. 
The above condition is both \emph{necessary} and \emph{sufficient} to achieve a successful byzantine consensus (agreement).

A faulty CM node can only assert its influence by sending a false byzantine consensus message. 
However, this message is not replicated by all of the other CM nodes as it was not sent by any of the CB nodes directly/indirectly (as reference).
Thus, the minimum number of messages will not be received by any of the other nodes of the cluster, leading to the failure of atomic broadcasts.
The clustered architecture and byzantine consensus mechanism established in C-sync can handle all the aforementioned fault types and can further be extended to handle other fault types and even certain attacks. The discussion on these extensions are beyond the scope of this paper. 


The flooding of messages within the cluster leads to a temporary increase in power consumption. 
However, it is a small price to pay to contain faults within a cluster and achieve resilience against faulty nodes.
It is important to note that the byzantine consensus completes within the same time slot when the time information has to be distributed and hence, does not impact accuracy.

\begin{thm}\label{fault_thm}
	Any faulty node in the network can be detected and corrected in C-sync if the stated assumptions are satisfied.
\end{thm}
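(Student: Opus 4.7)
The plan is to prove the theorem by case analysis on the role of the faulty node combined with an inductive argument on the hop-distance from the local center along which erroneous time information can propagate. Since the faults under consideration are fail-stop together with the byzantine subset that manifests as selective forwarding, discovery flooding, and altered information, it suffices to show that each such fault at each possible role (CM, CB, CBH, CH) is either prevented by the protocol's structural safeguards during clustering or exposed and corrected by an atomic broadcast during consensus.

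First I would dispose of the clustering phase. By Assumption~\ref{assum:no_fake_addr}, a faulty node cannot forge its hardware MAC address; combined with the AES cipher whose key is the pair of neighboring CH identifiers, a node that was not genuinely heard by both candidate CHs cannot produce a CB declaration that decrypts consistently at both endpoints. Hence any inflated degree obtained through a high-power radio (discovery flooding) fails to translate into CB or CBH status, and the clustering phase emerges with authenticated routing roles. Altered-information faults such as spikes and outliers during discovery are filtered by the acceptable-range threshold already introduced in Section~\ref{protocol}.

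Next I would dispatch the consensus phase by a role-based case split. For a faulty CM, its only lever is to emit a spurious byzantine-consensus message; such a message carries no CB sender or reference, so by Assumption~\ref{assum:min_neighbors} the majority of fault-free neighbors refuse to retransmit it, and the $\floor{n_i/2}+1$ threshold for atomic-broadcast acceptance is never reached. For a faulty CH or CBH, the CB monitors scheduled two transmission times after the synchronization slot observe either a missing broadcast (fail-stop, intermittent, selective forwarding) or an out-of-range value (spike, outlier); because Assumption~\ref{assum:min_CBs} guarantees $\floor{n_{CB}/2}+1$ fault-free CBs between any two clusters, the scheduled atomic broadcast is actually issued, and by Assumption~\ref{assum:min_neighbors} the correct cluster members re-transmit until the majority threshold is met. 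The atomic-broadcast property of~\cite{CRISTIAN1995158} then delivers the same corrected information to every correct node in bounded time and identical order, after which the offender is blacklisted. A faulty CB is symmetric: it either omits (covered by the remaining $\floor{n_{CB}/2}+1$ majority) or injects a contradictory consensus that the same majority outvotes. An induction on hop-distance from the LC, with induction hypothesis that all nodes closer to the LC have already converged to the correct time and set of blacklisted nodes, then propagates correctness throughout the connected network, which is non-empty by Assumption~\ref{assum:min_cluster}.

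The main obstacle, I expect, will be verifying that the atomic-broadcast criteria are genuinely met inside a single consensus time slot: that the two-transmission delay budget plus the $\floor{n_i/2}+1$ re-transmission threshold leaves enough wall-clock room for every correct receiver to obtain the message within the same slot, and that the convention of choosing the higher slot number and higher ID during consensus convergence prevents two fault-free CBs from initiating conflicting atomic broadcasts that could starve each other of the required majority. These timing and tie-break arguments are where the inductive step genuinely rests; once they are in place, everything else reduces to the majority counting enabled by Assumptions~\ref{assum:min_neighbors} and~\ref{assum:min_CBs}.
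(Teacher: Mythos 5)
Your proposal is correct in substance and rests on the same pillars as the paper's proof --- the role-based case split (CM, CB/CBH, CH), the observation that a faulty CM's spurious consensus message lacks a CB reference and so is never re-transmitted, the two-transmission monitoring delay, and the majority counting supplied by Assumptions~\ref{assum:min_neighbors} and~\ref{assum:min_CBs} --- but your inductive skeleton is genuinely different. The paper inducts on the \emph{number of faults} $k$: the base case is a single fault per role, the hypothesis is that $k$ faults satisfying $k < \frac{n_i}{2}$ and $k < \frac{n_{CB1}}{2}$ are handled, and the step shows that one additional faulty node of any role still cannot overturn the $\floor{\frac{n_i}{2}}+1$ correct majority. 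You instead absorb the multiplicity of faults directly into a single majority-counting argument per role and induct on \emph{hop distance from the LC}, with the hypothesis that nodes closer to the LC have already converged; this makes explicit something the paper leaves implicit, namely that the CB set on the LC-facing side of a faulty CH actually possesses correct reference time before it can flood a corrective atomic broadcast. You also treat the clustering phase as a separate, prevention-based case (MAC addresses plus AES-keyed CB declarations), whereas the paper folds clustering-phase faults into consensus-phase detection; be careful not to overstate this, since a faulty node with an inflated degree can still become a CH --- only its CB/CBH promotion is blocked --- though your consensus-phase CH case covers that residual possibility anyway. Finally, the timing and tie-break concerns you flag at the end (whether the broadcast completes within one slot, and whether two fault-free CBs could starve each other) are not resolved in the paper either; it simply asserts that the byzantine consensus completes within the same time slot, so your proposal is no weaker than the published argument on that point.
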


\begin{proof}
	
We prove the theorem using the principle of induction. 
The proof is provided with reference to a single cluster as a representative case consisting of a cluster head CH  that connects to other clusters through CB nodes and the associated CM nodes such that every node meets Assumption~\ref{assum:min_neighbors}.
Let $\mathrm{CB1}$ and $\mathrm{CB2}$ denote the set of CB nodes of CH, with $\mathrm{CB1}$ located closer to the LC and $\mathrm{CB2}$ located farther away from LC, such that the information chain traverses from LC to CB1 to CH to CB2.
Multiple nodes are present within the sets CB1 and CB2 to ensure that Assumption~\ref{assum:min_CBs} is met.
Without loss of generality, the same proof applies to every cluster in the network independently.
Also, the proof for the CB1 set applies to the CB2 set as well.

For the base case, let us assume there is only one fault in the cluster.
If CH is the faulty node, the information from CB1 is either dropped or modified before distributing it to CB2 and its CM nodes.
CB1 nodes including CB1H (CBH in the set of CB1) send a byzantine consensus message immediately (modified time information) or as scheduled (selective forwarding) such that it gets re-transmitted across the cluster upon reception. 
Based on Assumptions~\ref{assum:min_neighbors} and~\ref{assum:min_CBs}, the fault-free $\floor{\frac{n_i}{2}} + 1$ neighbors propagate the correct information from the received byzantine consensus messages to reach an agreement. 
Similarly, if the CB1H node or one of the CB1 nodes skip sending the synchronization message or send faulty time information, the remaining CB1 nodes initiate the byzantine consensus.
A faulty CM node can directly trigger a byzantine consensus message with incorrect time information.
However, since the CB1H address cannot be used as a reference, the fault-free neighbors do not re-transmit this message.
By Assumption~\ref{assum:min_neighbors}, byzantine consensus for the faulty CM node will not be reached.

Let us assume that the C-sync protocol can detect and correct up to $k$ faults that satisfy assumptions~\ref{assum:no_fake_addr}, \ref{assum:min_neighbors} and \ref{assum:min_CBs}, \emph{i.e.,} $k < \frac{n_i}{2}$ for all the cluster nodes including CH and $k < \frac{n_{CB1}}{2}$ for the set CB1.

Consider $k+1$ faults that satisfy the assumptions~\ref{assum:no_fake_addr}, \ref{assum:min_neighbors} and \ref{assum:min_CBs} and the assumption of $k$ faults as stated above.
If the additional faulty node in the $k+1$ faults is a CM node and assumption 1 holds, this CM node can trigger a byzantine consensus message without CBH or any of CB1 as reference nodes.
Additionally, with assumption~\ref{assum:min_neighbors}, we can infer that $(k+1) < \frac{n_i}{2}$.
Thus, the fault will not be propagated and consensus on faulty information will not be reached \emph{i.e.,} correct nodes are not impacted.
With CH as the faulty node, CB1 nodes initiate byzantine consensus messages immediately if the error from the CH node exceeds the threshold.
Since $(k+1) < \frac{n_{CB1}}{2}$ and $(k+1) < \frac{n_i}{2}$ hold, then the messages with correct information from the $\frac{n_i}{2} + 1$ nodes (CB1 and CM) of the cluster dominate the faulty byzantine messages ($k+1$) leading to successful detection and correction of the fault.
Lastly, if either CB1H or any of the other CB1 nodes is the additional faulty node, and $(k+1) < \frac{n_{CB1}}{2}$ holds, the messages from the $k+1$ nodes are not sufficient to reach the byzantine agreement.
Thus, the correct CB nodes are able to disseminate the information within the cluster.

Theorem~\ref{fault_thm} holds for any $k$ such that assumptions~\ref{assum:no_fake_addr},~\ref{assum:min_neighbors} and~\ref{assum:min_CBs} are satisfied.
Hence, with the C-sync protocol, wireless networks can achieve resilience to faults.
\end{proof}

\subsubsection*{Discussion}
\label{sec:exceptions}
An exception to the fault handling in C-sync is the case of node mobility during message transmission.
A mobile node could initiate a transition from one cluster to another cluster during the synchronization message transmission (consensus synchronization state), mimicking a fail-stop fault/selective forwarding.
Consequently, the node gets blacklisted although movement was a legitimate action.
This exception can be addressed through an exchange of membership information in addition to time information among different CH nodes of the network in the consensus convergence state and is beyond the scope of this paper.
Additionally, small networks that result in a single cluster without a CB are also an exception to the fault handling mechanism of C-sync. 
To handle faults in this scenario, the CM nodes would be awake for a longer duration to initiate a consensus flooding (if required), resulting in a power efficiency vs. fault tolerance trade-off.

\section{Experiments and Results}\label{exp}


To test the performance of the C-sync protocol, we introduce a fault in a CH node of a simple representative network to show that the fault detection, correction and containment within the cluster are achieved. 
Further, we conduct experiments to measure the power consumption as well as the accuracy of synchronization. 
Finally, we demonstrate through a chain topology that the synchronization error for any node to its nearest time source is restricted.

\subsection{Experimental Setup}
\label{sec:setup}
The C-sync protocol was implemented on the Contiki 3.0 operating system~\cite{Dunkels}, with Tmote Sky~\cite{skymote} boards utilizing the IEEE 802.15.4 communication standard.
While nodes with the same capabilities are used for testing purposes, C-sync can be used in networks having nodes with heterogeneous capabilities.
Tmote Sky consists of MSP430~\cite{msp430} micro-controller and CC2420~\cite{cc2420} radio communication chip from Texas Instruments.
The experiments were conducted on the Indriya~\cite{indriya} testbed at the National University of Singapore, where there are over 50 Tmote Skys deployed on different floors of a building. 
The implementation has been made available for verification.~\footnote{https://github.com/nitinshivaraman/C-sync}

The software architecture of C-sync is shown in Figure~\ref{fig:contiki}.
Starting from the lowest layer, the network stack maps the bottom three layers of the OSI protocol stack whose output is passed to the upper layers.
The network stack handles the radio communication including physical layer control, link-layer security features such as AES, CSMA-CA, and MAC-layer features such as MAC-layer timestamping.
Chameleon is a header transformation layer that adds or removes the header component from the packet buffer.

 \begin{figure}
	\centering
	\includegraphics[scale=0.8, width = 0.94\columnwidth]{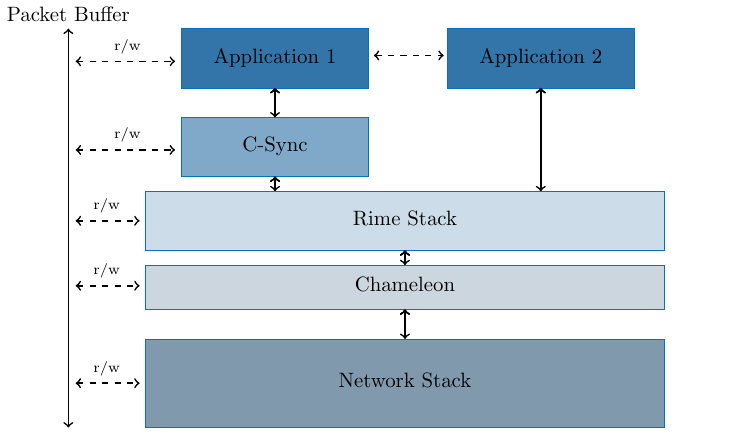}
	\caption{Software architecture of Contiki OS integrated with C-sync.}
	\label{fig:contiki}
\end{figure}

Rime stack is a versatile layer that provides various implementations of communication primitives such as periodic broadcasts, etc. 
This layer handles the packet buffer queuing in case the channel is busy due to CSMA-CA.
Rime also offers a variant of the broadcast primitive called polite announcement that reduces the messages within the radio range of a node by monitoring the channel for any duplicate messages.
If a duplicate message is received, the scheduled transmission packet is dropped; and the transmission continues once the channel is free or if a different message is received.

C-sync runs concurrently over the Rime stack. 
Applications that require energy-efficient and resilient time synchronization for their application may use C-sync, while other applications can directly communicate using the Rime stack.
During the idle phase of C-sync, the applications directly take over the network stack till the next scheduled consensus synchronization/discovery state using interrupts and function callbacks. 
It is important to note that all the layers operate on the same packet buffer and it is passed among the layers every time it is populated or depopulated depending on the direction of the packet. 
For comparison purposes, we have also implemented the Gradient Time Synchronization Protocol (GTSP)~\cite{5211944} on Contiki.
GTSP was chosen for comparison as the representative protocol among the class of decentralized solutions since it forms the basis of averaging and consensus features used in the other recent solutions.
GTSP is also the only decentralized synchronization protocol that has shown feasibility for hardware testing for generic wireless sensor networks.

\paragraph{Hardware clock}
The MSP430 microcontroller consists of a low-frequency 32 kHz crystal oscillator and a high-frequency 8MHz digitally controlled oscillator (DCO).
In our protocol, the hardware clock is derived from a combination of crystal oscillator and the DCO using two 16-bit registers coupled with a clock divider circuit (1/64).
Hence, timer-A extracts the value of the crystal oscillator with a maximum frequency of 512 Hz and a resolution of approximately \SI{30}{\micro\second}, while timer-B extracts the value of DCO operating at a frequency of 524,288 Hz and yields a resolution of approximately \SI{1.9}{\micro\second}.
However, the DCO is very unstable and is susceptible to temperature and voltage changes with a drift of up to 20\%.


\begin{figure}
	\centering
	\centering
	\begin{subfigure}[b]{0.5\textwidth}
		\includegraphics[trim={0 7cm 0 0},clip,width = 0.8\columnwidth]{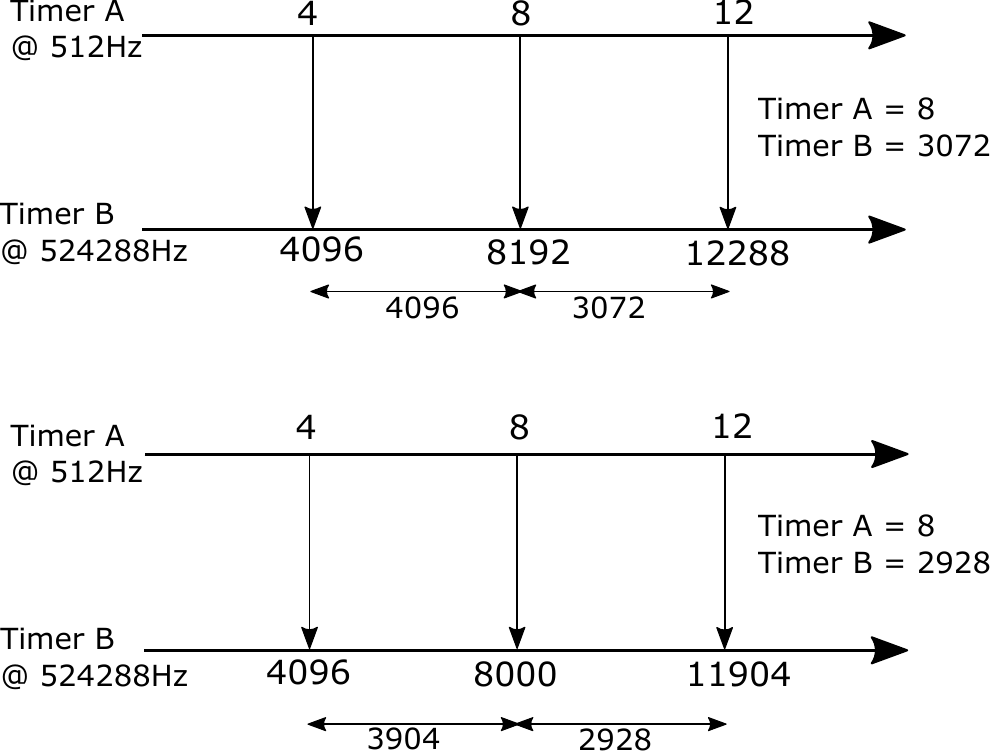}
		\caption{}
		\label{fig:timer_a}
	\end{subfigure}
	
	\begin{subfigure}[b]{0.5\textwidth}
		\includegraphics[trim={0 0 0 7cm},clip,width = 0.8\columnwidth]{figures/timer_compensation.pdf}
		\caption{}
		\label{fig:timer_b}
	\end{subfigure}
	\caption{Compensation of unstable high-frequency DCO using the stable low-frequency crystal oscillator.}
	\label{fig:timer}
\end{figure}

To get a high-resolution stable clock, the crystal oscillator is used to sample the DCO at regular intervals and obtain the DCO drift factor as shown in Figure~\ref{fig:timer}.
Every 4 ticks of timer-A correspond to 4096 ticks of timer-B: this definition is used to compute the correct value of timer-B independent of any variation of the DCO as seen in Figure~\ref{fig:timer_a}. 
Ideal clock and DCO variations cause timer-B variations as shown in Figure~\ref{fig:timer_b}.
The drift of DCO is computed as the ratio of the estimated value $tb_e$ over the actual value of timer-B $tb_a$ as: 
\begin{equation}
\mathrm{drift} = tb_e/tb_a
\end{equation}

This drift factor is multiplied with the current timer-B value to obtain the exact hardware clock value.
With the example shown in Figure~\ref{fig:timer}, the drift factor is $4096/3904 = 1.04918$.
The product of drift with clock value would provide the exact clock value as $1.04918 * 3904 = 4096$.

\subsection{Boot time}
The initial boot time of C-sync to switch from discovery state and achieve loose synchronization is dependent on the density of the network as denser networks take longer to converge due to a larger set of unsynchronized neighbors.
Empirically, in our setup, we found the initial transition to the election phase was roughly 10x the ST interval. Hence, the time to complete the clustering phase from the start of the discovery phase is 10+4 ST intervals.

\subsection{Fault detection and correction}
\label{sec:fault}
The fault detection and correction steps in C-sync are tested by introducing a fault into one of the nodes.
Since the fault handling mechanism is similar across all node types (CH, CB or CM), fault at one type of node is representative of all the other types of nodes. 
With the CB nodes monitoring the flow and correctness of the information, the fault detection at either of three node types follows a similar pattern with exceptions discussed in Section~\ref{sec:exceptions}.
Additionally, the fault handling in a single cluster is representative of the fault handling in the entire network as each cluster counters the fault the same way as any other cluster.

\begin{figure}
	\centering
	\includegraphics[width = 0.9\columnwidth]{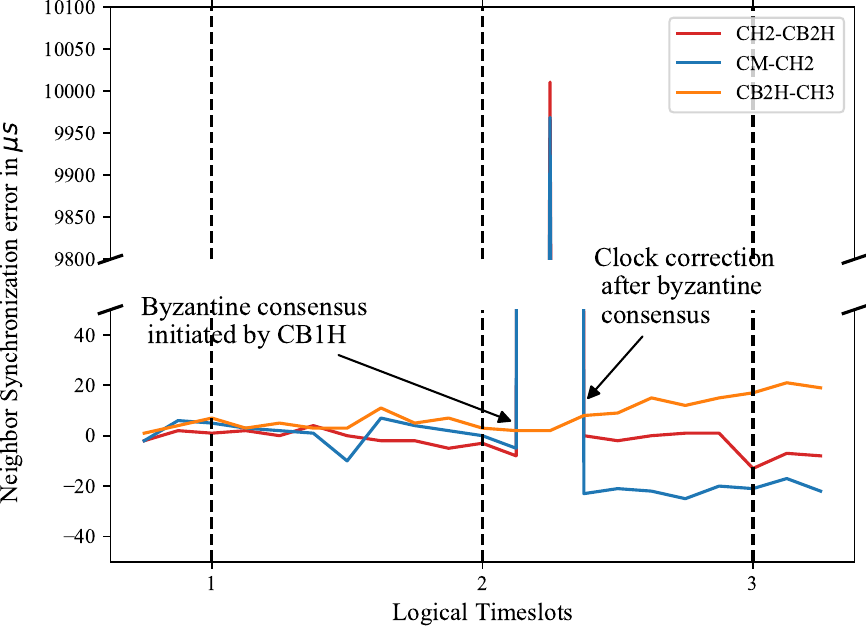}
	\caption{The synchronization error of different nodes within a cluster having a faulty CH. The error remains within one logical time slot without any impact on network synchronization.}
	\label{fig:byzantine}
\end{figure}

We consider an example of three clusters with CH nodes CH1, CH2 and CH3 connected by the bridge node CB1H and CB2H similar to Figure~\ref{general_nw}, where the information flow goes in the order of CH1, CB1H, CH2, CB2H and CH3 from left to right.
Let us suppose CH2 is a faulty node in the network.
Since only nodes of the CH2 cluster are impacted by the fault, we consider only the common nodes in the CH2 cluster denoted as CM.
The threshold for an error to initiate byzantine consensus is set to $500 \mu s$.
Synchronization error is measured as the difference between the clock ticks in the logical clocks of the receiving node and the neighbor node.
CH2 introduces a synchronization error of $10000 \mu s$ in its time slot (slot $2$) after receiving a message from LC, as seen in Figure~\ref{fig:byzantine}.
The CB2H node tries to synchronize to this initially resulting in a high error.
The high error is shown with a broken axis on the plot where both CB and CM are impacted by faulty time information.
The vertical lines on the plot for CH2-CB2H and CM-CH2 converge to the point in the upper half of the plot.
However, byzantine consensus messages with the correct clock value are flooded within the cluster by the CB1H node of the first cluster as shown in Figure~\ref{fig:byzantine}.
Both the CM and CBH nodes reset their clocks to the information in the byzantine consensus messages within the same slot $2$. 
The logical timeslot represents the duration of a synchronization slot, \emph{i.e.,} a combination of consensus synchronization state and the idle state.

CB2H transmits the updated correct clock information to CH3 in the next time slot.
As seen from Figure~\ref{fig:byzantine}, the neighboring cluster head CH3 is not impacted by the fault in the cluster with CH2.
Hence, the fault is contained within a single cluster and recovers with byzantine consensus as expected from the discussion in Section~\ref{sec:fault_handling}.
It is important to note that a fault (if any) in the clustering phase, gets detected in the consensus phase as the error and fault tolerance are a consequence of nodes elected in the clustering phase.

\subsection{Energy efficiency for Fault-tolerance}
\label{sec:energy}
To demonstrate the efficient operation of C-sync, we conducted experiments to measure the synchronization error and the power consumed for achieving neighbor synchronization as shown in Figure~\ref{fig:powervserror}.
Neighbor nodes are a part of the clustered architecture, including, but not limited to communication between CM-CH, CB-CH and CH-CB-CH.
A scatter plot of average neighbor synchronization error in $\mu s$ on the y-axis against average power consumption of each node in $mW$ on the x-axis after the protocols move to the idle phase is plotted.
Each point on the plot represents an averaged value of power and synchronization error for each node in the network over the entire duration of the experiment.
Measured values of synchronization error are plotted without taking an absolute value, \emph{i.e.,} the offset can be both positive or negative.

Although the error and the power are not correlated, they provide an intuition on the synchronization protocol performance to achieve low synchronization error and power consumption.
Similar to C-sync, the GTSP idle phase begins at the end of the discovery phase since GTSP does not employ a state machine in its protocol.


\begin{figure}
	\centering
	\includegraphics[width=0.95\linewidth]{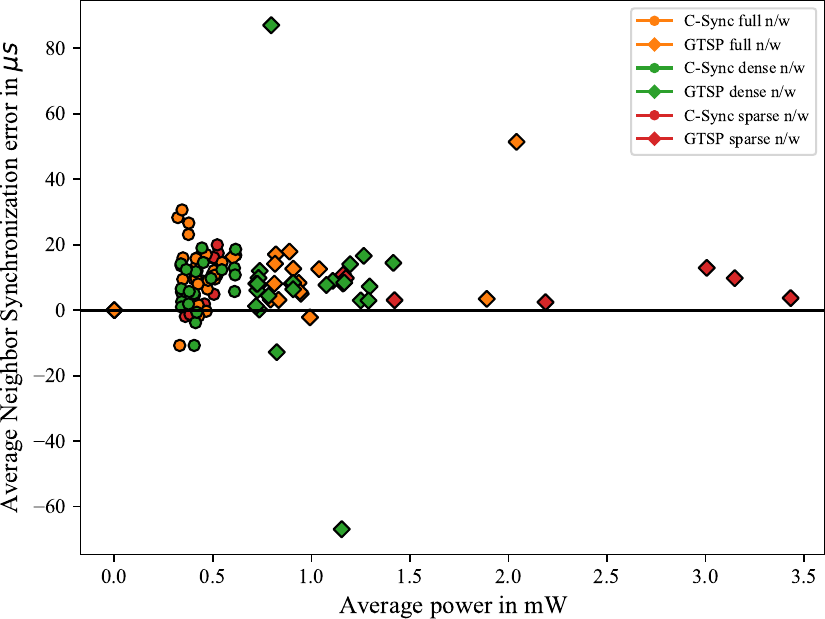}
	\caption{Scatter plot of average neighbor synchronization error and the average power consumption for each node over different topologies that are densely distributed, sparsely distributed and using the entire testbed compared for C-sync and GTSP.}
	\label{fig:powervserror}
\end{figure}

The scatter plot consists of tests conducted in three different topologies formed by utilizing different configurations of the testbed.
The summary of the results of average neighbor synchronization error (offset) and average network power consumption (power) with corresponding standard deviations is shown in Table~\ref{tab:results}.

The dense distribution of nodes is formed by using the different pockets of nodes closely grouped together on each floor of the building (denoted as dense n/w in the figure).
GTSP performs best in a dense environment due to the closeness of the nodes and, hence, converges quickly with a synchronization error of $7.05 \mu s$. 
However, due to the continuous exchange of messages, there is a high power consumption of $0.98 mW$. 
C-sync forms larger clusters with more CM nodes in a denser environment.
Hence, more nodes remain in sleep mode while achieving a similar synchronization error of $7.52 \mu s$ and low power consumption of $0.43 mW$, having a reduction of power by 56.12\% compared to GTSP.

The sparse distribution of nodes is formed by utilizing a few nodes from each floor to communicate with each other (denoted as sparse n/w in the figure).
Due to the sparse distribution, the convergence speed of the 1-hop synchronization algorithm is impacted while multi-hop synchronization performs better~\cite{PANIGRAHI2017124}.
Hence, due to the neighbor-only synchronization, GTSP incurs a longer convergence time with a synchronization error of $7.67 \mu s$ and power consumption of $1.98 mW$.
However, C-sync synchronizes to the closest neighbors to form smaller clusters while letting farther nodes become CB nodes to other similar clusters.
Hence, there is a multi-hop synchronization across multiple clusters yielding a synchronization error of $10.05 \mu s$ and power consumption of $0.48 mW$ which is 75.75\% lower than GTSP.
This way, there is a faster convergence, yielding significant power reduction.

\begin{table}[]
	\caption{Results of average neighbor synchronization error and power consumption measured (with associated standard deviation) over 30 minutes across different topologies.}
	\label{tab:results}
	\resizebox{\columnwidth}{!}{%
	\begin{tabular}{|l|l|l|l|}
		\hline
		\centering
		& Dense Network & Sparse Network & Full Network \\ \hline
		GTSP sync. error ($\mu s$)   & 7.05 (24.51)        & 7.67 (3.73)         & 10.13  (11.64)      \\ \hline
		C-Sync sync. error ($\mu s$) & 7.52 (7.22)       & 10.05 (6.78)       & 12.24 (9.35)      \\ \hline
		GTSP Power ($mW$)   & 0.98 (0.24)       & 1.98 (0.97)        & 0.89 (0.48)      \\ \hline
		C-Sync Power ($mW$)  & 0.43 (0.1)       & 0.48 (0.07)        & 0.43 (0.06)     \\ \hline
	\end{tabular}
	}
\end{table}

Combining both the above environments, utilizing all the nodes of the testbed provides us the results for a full network (denoted as full n/w in the figure).
GTSP synchronizes to the mixed environment yielding an average synchronization error of $10.13 \mu s$ and consuming $0.89 mW$.
C-sync forms multiple clusters with a mixture of large and small clusters connected via cluster bridges.
C-sync achieves an average power consumption of $0.43 mW$ which is roughly 51.68\% of the power consumption of GTSP in the full network topology.
The synchronization error is similar to GTSP with C-sync having an average synchronization error of $10.13 \mu s$.
Both protocols take a longer time than the dense and sparse configurations to achieve synchronization due to the higher number of nodes in the network.

The difference in average synchronization error between C-sync and GTSP across various topologies is at a maximum of about $2.4 \mu s$. This is equivalent to one tick of the clock at a resolution of $1.9 \mu s$.
This difference can be attributed to the measurement error and hence, can be concluded that the accuracy of C-sync is similar to that of GTSP.

C-sync consistently achieves the roughly same deviation in synchronization error, whereas GTSP experiences a larger deviation depending on the topology.
The fixed set of states and the leader-follower synchronization in C-sync ensures that the protocol converges and moves into the idle state faster, leading to lower variation in power consumption.

\subsection{Synchronization Error to Local Center}
\label{sec:error}
As described in the protocol, C-sync is able to restrain the distance of any node to the time source within a pre-defined parametric threshold.
In order to demonstrate this experimentally, we constructed a chain topology to emulate a multi-hop network as shown in Figure~\ref{fig:chain_topo}.
Ignoring all the CM nodes except for the ones on the corner CH nodes, there are 13 nodes in the chain. 

\begin{figure}
	\centering
	\includegraphics[width = 0.9\columnwidth]{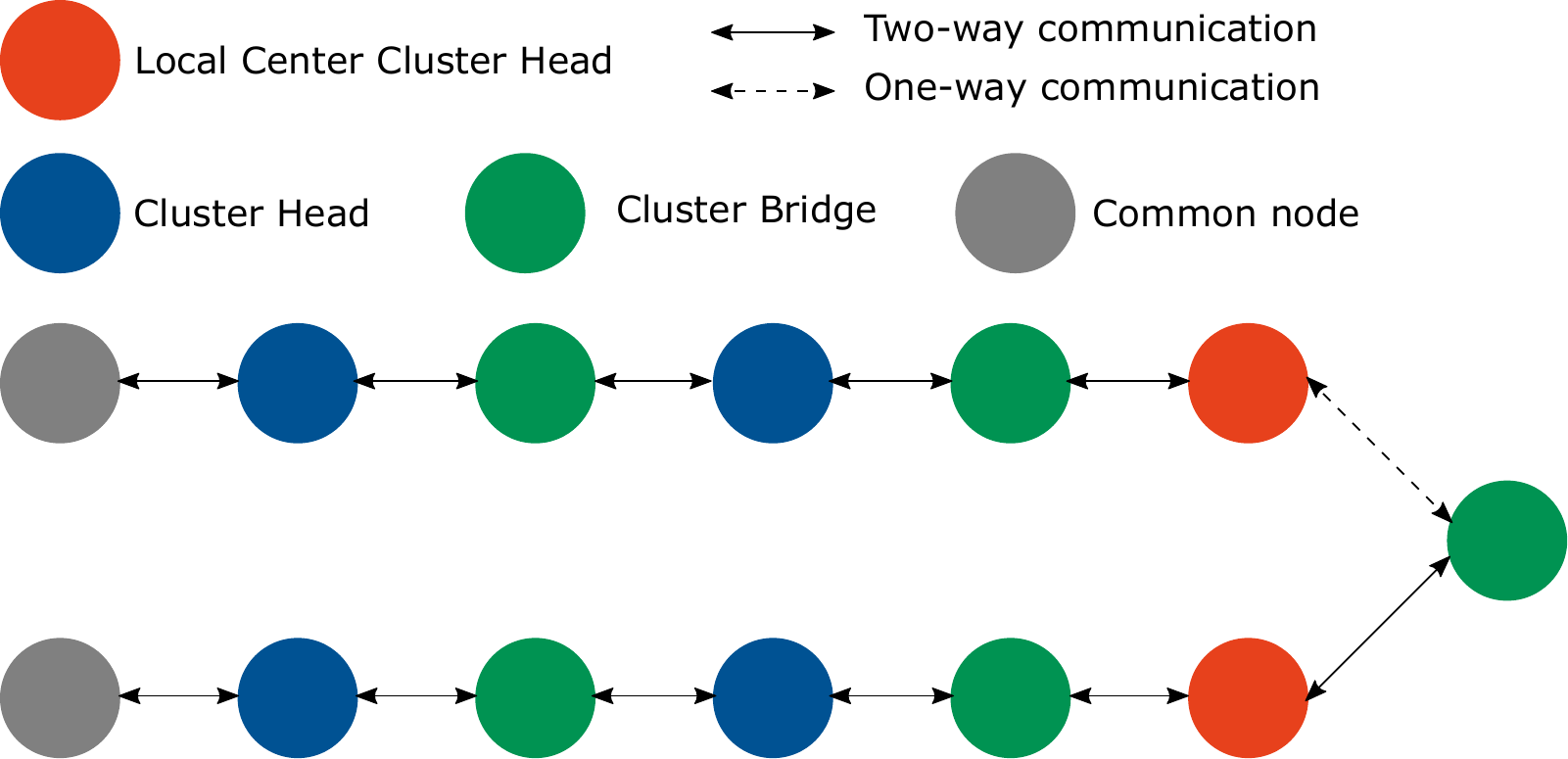}
	\caption{Chain of Cluster Heads and Cluster Bridge ignoring the Common nodes to demonstrate a multi-hop network.}
	\label{fig:chain_topo}
\end{figure}

Letting C-sync run on every node, a chain of CH and CB was established, with each node (except the CM nodes) connected to two neighbors.
The synchronization error from the LCs to each of the nodes associated with the LCs is plotted over the number of hops it receives the time information.
As seen in the example, LC is reached from both ends and the edge nodes synchronize to the time information sent by their associated LC over multiple hops.

The CB node connecting both LC receives information from both but synchronizes to only one of them based on the address of the LC (since both LC have the same hop count).
This can be configured to average the time information and further transmitting it for synchronization among LCs when LCs have different slots of transmission.
Since both LCs have the same hop count in our example, both LCs are active in the same time slot. 
Hence, the CB information is not received by the LC in the next time slot.
The entire network gets re-synchronized at the next discovery phase.

For our example network, the maximum number of hops from any node to its LC was observed to be five.
Each slot in the synchronization phase is $300 ms$ and the idle phase is configured to be ten such slots.
Computing the periodicity of messages received where every node is active only for 1 slot, each node receives a synchronization message once every $10 + (5-1)$ slots, \emph{i.e.,} a periodicity of 14 slots.
The ideal minimum synchronization error to the LC in the case of C-sync is the duration of 1 instruction (1 tick) of recording MAC-layer timestamping.
Based on the resolution of the clock, it is approximately 1.9$\mu s$.
Using the information of maximum hop count and the minimum synchronization error, the worst-case calculated synchronization error to the LC can be calculated from our previous definition of $\eta \cdot \tau \cdot \delta$ as $(1.9 \cdot 14 \cdot 5) = 133\mu s$.
However, this limit reduces if the number of hops of a node is lower than five.
The experimental measurement for the synchronization error to the LC over multiple hops is plotted against the estimated value as shown in Figure~\ref{fig:bounded_error}.

 \begin{figure}
	\centering
	\includegraphics[width = 0.9\linewidth]{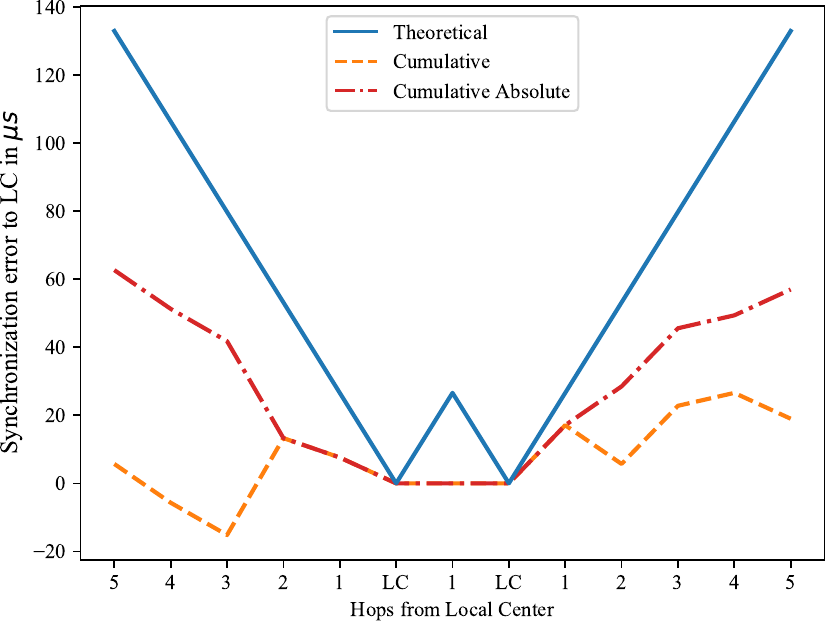}
	\caption{Synchronization error is bounded by the established maximum number of hops from the local center.}
	\label{fig:bounded_error}
\end{figure}

At the LC, the error is zero since its own clock is the reference clock.
As seen from the plot, both the cumulative error and the absolute cumulative error (converting negative offset to positive) were measured and found to stay below the computed theoretical value.

\section{Related Work}\label{literature}


The existing literature can be classified based on the availability of fault handling mechanisms in the protocols.
Protocols with dedicated fault-tolerant features are described under secure synchronization solutions while the rest of the protocols are grouped together as generic synchronization solutions.

\subsubsection*{Secure synchronization solutions}
Ganeriwal \emph{et. al.} investigated the problem of secure time synchronization using shared key encryption and delay threshold-based detection~\cite{10.1145/1380564.1380571}, \emph{i.e.,} the delay between the estimated and the actual time.
Li and Rus counter byzantine faults by adding a layer of cryptographic encoding and decoding during message exchanges but suffer from long convergence times~\cite{1566581}.
Blockchain-based protocols~\cite{8482304,9191420} achieve resilience against byzantine faults at the cost of low accuracy (in the order of seconds) and high computation overhead.
Max-consensus was used to estimate the clock difference to a threshold value to detect byzantine nodes but achieves a low synchronization accuracy~\cite{6520853}.
Sundial~\cite{258951} was proposed for fault-resilient synchronization between data centers by combining a hardware-based detection and software-based reconfiguration for hardware to handle faults.
However, Sundial requires additional hardware (in contrast to resource-constrained IoT nodes) for fault detection while taking substantial time for reconfiguration change transmission from the reference node to an actual change in hardware.
Temporal correlation of messages was used between neighboring nodes to correct synchronization errors~\cite{4509612}.
However, error detection and correction require the exchange of a significant number of messages resulting in communication overhead.
While assuming a trusted resource-rich reference node, digital signatures and message filters were used as validation tools~\cite{5345607}. Secure synchronization protocols focus on ensuring resilience to faults but do not cater to accuracy and energy efficiency due to complex fault handling mechanisms. Additionally, most of the works on secure synchronization solutions are centralized with the assumption that the reference node cannot be faulty. Furthermore, they have not been hardware-proven.

\subsubsection*{Generic Synchronization solutions} 
Historically, the Global Positioning System (GPS) or Network-Time Protocol (NTP)~\cite{103043} has been used for time synchronization in networks. 
However, these protocols are not applicable for resource-constrained nodes.
Precision Time Protocol (PTP) uses a master-slave architecture to synchronize the clock rate at the network layer~\cite{9120376}. PTP requires custom PTP-compatible hardware modules for synchronization and is susceptible to de-synchronization from packet delays.
Reference Broadcast Synchronization (RBS)~\cite{Elson} synchronized a set of receivers to minimize sender-side uncertainties while Time-sync Protocol for Sensor Networks (TPSN) used peer-to-peer synchronization with MAC-layer timestamps for both sender and receiver nodes~\cite{Ganeriwal:2003:TPS:958491.958508}.
However, both protocols cannot handle ad-hoc networks and have no compensation for clock drifts. Flooding-based protocols~\cite{Maroti,6777583,5779066} utilize a central reference node that periodically transmits a large number of synchronization messages in quick succession to which all nodes of the network synchronize at different hops. 
Pulsesync protocol~\cite{6777583} uses rapid flooding to reduce the flooding latency and improves upon FTSP~\cite{Maroti}.
Meanwhile, Glossy protocol achieves synchronization using constructive interference of modulated signals with a temporal displacement within a threshold~\cite{5779066}. This necessitates nodes to be equipped with high-quality radios with low noise and distance between nodes to achieve the synchronization threshold.
Centralized solutions have a single-point failure when the reference node fails, leading to downtime before a new reference node gets elected.
It is important to note that the inherent reliance on the reference node by all nodes of the network leads to a single-point of failure and constant re-configuration in presence of faults.
The delay due to regular re-configurations could be catastrophic in critical real-time applications such as electric grids, etc.~\cite{7397831}.
Other solutions add to hardware overhead with specialized circuits and timers to reduce the jitter from the existing hardware clocks~\cite{8277300,6728881}.
Decentralized solutions do not rely on a single reference node to achieve time synchronization.
To this end, Gradient Time Synchronization Protocol (GTSP)~\cite{5211944} synchronizes precisely among the neighbors by estimating a global clock formed by an average of drift and offset among 1-hop neighbors.
Based on the network topology and placement of nodes, clustering coupled with consensus has been proposed to synchronize the nodes~\cite{7504162,6926769}.
Wu~\emph{et. al.} use the LEACH~\cite{926982} clustering protocol which assumes a synchronized network for communication while Wang~\emph{et. al.} assume a fixed topology with a fixed state for all nodes without any communication delays.
Both cluster-based protocols are not resilient against faults and cannot adapt to dynamic changes in the network.
Using the concept of consensus from control theory~\cite{1333204}, solutions compute a common offset and drift in a tree structure~\cite{SCHENATO20111878, Xie2018133}.
Although synchronization in the consensus mechanism completes in a fixed period, they are compute-intensive with a growing consensus convergence time as the network scales.

Among the available synchronization solutions, the GTSP protocol has an inherent fault tolerance due to averaging of individual time information, \emph{i.e.,} a single faulty node does not impact the overall average.
Additionally, GTSP is the only decentralized protocol with high accuracy for generic wireless networks that had been proven on hardware.
However, GTSP suffers from high power consumption due to the continuous exchange of messages with every neighbor during synchronization with a wider network impact in presence of faults.
Our proposed solution, C-sync, has a slightly longer initial convergence time due to the overhead of establishing the clustered architecture from a completely decentralized structure as discussed in Section~\ref{clustering}.
Exploiting this network structure, C-sync can achieve significant energy savings by a limited exchange of messages for maintaining synchronization and handling faults.
A fault in C-sync gets isolated to the specific cluster/clusters within which nodes can operate without a reference or get a new reference node without impacting the rest of the network.
Hence, we chose to compare C-sync against GTSP for the synchronization accuracy and energy efficiency on a hardware platform as presented in Section~\ref{sec:energy}.
To the best of our knowledge, there is no time synchronization solution that achieves similar fault resilience with energy efficiency as C-sync.

\section{Conclusion and Future work}\label{conclusions}
In this paper, we presented C-sync, a clustering-based decentralized time synchronization protocol that is both resilient to faults and energy-efficient.
C-sync maintains suitable accuracy and uses the clustered architecture to enable more nodes of the network to remain in sleep mode.
This architecture of C-sync paves way for design of resilient and scalable real-time applications on decentralized networks.
The implementation has been done on Contiki and a hardware testbed.

The fault handling mechanism with byzantine consensus was described and demonstrated experimentally by introducing a fault in a simple network topology that can be scaled.
We illustrated through experiments that C-sync achieves significantly lower power consumption compared to GTSP while attaining similar accuracy.
Additionally, the concept of local centers was introduced and their role in restricting the synchronization error in the network was demonstrated.

The modular nature of the implementation can be used to expand the future applicability of the C-sync to heterogeneous platforms.
Additionally, we aim to include more types of faults to achieve a more comprehensive fault resilience.
\section{Acknowledgements}
This work was financially supported in part by the Singapore National Research Foundation under its Campus for Research Excellence And Technological Enterprise (CREATE) programme.


\bibliographystyle{IEEEtran}
\bibliography{references}

\begin{thebibliography}{10}
\providecommand{\url}[1]{#1}
\csname url@samestyle\endcsname
\providecommand{\newblock}{\relax}
\providecommand{\bibinfo}[2]{#2}
\providecommand{\BIBentrySTDinterwordspacing}{\spaceskip=0pt\relax}
\providecommand{\BIBentryALTinterwordstretchfactor}{4}
\providecommand{\BIBentryALTinterwordspacing}{\spaceskip=\fontdimen2\font plus
\BIBentryALTinterwordstretchfactor\fontdimen3\font minus
  \fontdimen4\font\relax}
\providecommand{\BIBforeignlanguage}[2]{{%
\expandafter\ifx\csname l@#1\endcsname\relax
\typeout{** WARNING: IEEEtran.bst: No hyphenation pattern has been}%
\typeout{** loaded for the language `#1'. Using the pattern for}%
\typeout{** the default language instead.}%
\else
\language=\csname l@#1\endcsname
\fi
#2}}
\providecommand{\BIBdecl}{\relax}
\BIBdecl

\bibitem{Ganeriwal:2003:TPS:958491.958508}
S.~{Ganeriwal}, R.~{Kumar}, and M.~B. {Srivastava}, ``Timing-sync protocol for
  sensor networks,'' in \emph{Proceedings of the 1st International Conference
  on Embedded Networked Sensor Systems}, ser. SenSys '03.\hskip 1em plus 0.5em
  minus 0.4em\relax ACM, 2003, pp. 138--149.

\bibitem{fault_survey}
K.~{Ni}, N.~{Ramanathan}, M.~N.~H. {Chehade}, L.~{Balzano}, S.~{Nair},
  S.~{Zahedi}, E.~{Kohler}, G.~{Pottie}, M.~{Hansen}, and M.~{Srivastava},
  ``Sensor network data fault types,'' \emph{ACM Transactions on Sensor
  Networks}, vol.~5, no.~3, 2009.

\bibitem{7397831}
B.~{Moussa}, M.~{Debbabi}, and C.~{Assi}, ``Security assessment of time
  synchronization mechanisms for the smart grid,'' \emph{IEEE Communications
  Surveys \& Tutorials}, pp. 1952--1973, 2016.

\bibitem{Maroti}
M.~{Mar\'{o}ti}, B.~{Kusy}, G.~{Simon}, and A.~{L{\'e}deczi}, ``The flooding
  time synchronization protocol,'' in \emph{Proceedings of the 2Nd
  International Conference on Embedded Networked Sensor Systems}, ser. SenSys
  '04.\hskip 1em plus 0.5em minus 0.4em\relax ACM, 2004, pp. 39--49.

\bibitem{6777583}
C.~{Lenzen}, P.~{Sommer}, and R.~{Wattenhofer}, ``Pulsesync: An efficient and
  scalable clock synchronization protocol,'' \emph{IEEE/ACM Transactions on
  Networking}, vol.~23, no.~3, pp. 717--727, 2015.

\bibitem{5779066}
F.~{Ferrari}, M.~{Zimmerling}, L.~{Thiele}, and O.~{Saukh}, ``Efficient network
  flooding and time synchronization with glossy,'' in \emph{Proceedings of the
  10th ACM/IEEE International Conference on Information Processing in Sensor
  Networks}, 2011, pp. 73--84.

\bibitem{5211944}
P.~{Sommer} and R.~{Wattenhofer}, ``Gradient clock synchronization in wireless
  sensor networks,'' in \emph{2009 International Conference on Information
  Processing in Sensor Networks}, 2009, pp. 37--48.

\bibitem{7504162}
Z.~Wang, P.~Zeng, M.~Zhou, and D.~Li, ``Cluster-based maximum consensus time
  synchronization in iwsns,'' in \emph{2016 IEEE 83rd Vehicular Technology
  Conference (VTC Spring)}, 2016, pp. 1--5.

\bibitem{6926769}
J.~{Wu}, L.~{Zhang}, Y.~{Bai}, and Y.~{Sun}, ``Cluster-based consensus time
  synchronization for wireless sensor networks,'' \emph{IEEE Sensors Journal},
  vol.~15, no.~3, pp. 1404--1413, 2015.

\bibitem{6520853}
J.~{He}, J.~{Chen}, P.~{Cheng}, and X.~{Cao}, ``Secure time synchronization in
  wirelesssensor networks: A maximumconsensus-based approach,'' \emph{IEEE
  Transactions on Parallel and Distributed Systems}, pp. 1055--1065, 2014.

\bibitem{103043}
D.~L. Mills, ``Internet time synchronization: the network time protocol,''
  \emph{IEEE Transactions on Communications}, vol.~39, no.~10, pp. 1482--1493,
  Oct 1991.

\bibitem{9209755}
N.~{Shivaraman}, S.~{Ramanathan}, S.~{Shanker}, A.~{Easwaran}, and
  S.~{Steinhorst}, ``Decoric: Decentralized connected resilient iot
  clustering,'' in \emph{2020 29th International Conference on Computer
  Communications and Networks (ICCCN)}, 2020, pp. 1--10.

\bibitem{1700009}
IEEE, ``{IEEE Standard for Information technology-- Local and metropolitan area
  networks-- Specific requirements-- Part 15.4: Wireless Medium Access Control
  (MAC) and Physical Layer (PHY) Specifications for Low Rate Wireless Personal
  Area Networks (WPANs)},'' \emph{IEEE Std 802.15.4-2006 (Revision of IEEE Std
  802.15.4-2003)}, Sep. 2006.

\bibitem{6489879}
A.~{Mahapatro} and P.~M. {Khilar}, ``Fault diagnosis in wireless sensor
  networks: A survey,'' \emph{IEEE Communications Surveys Tutorials}, vol.~15,
  no.~4, pp. 2000--2026, 2013.

\bibitem{KARLOF2003293}
C.~Karlof and D.~Wagner, ``Secure routing in wireless sensor networks: attacks
  and countermeasures,'' \emph{Ad Hoc Networks}, vol.~1, no.~2, pp. 293 -- 315,
  2003, sensor Network Protocols and Applications.

\bibitem{yin2003separating}
J.~Yin, J.~Martin, A.~Venkataramani, L.~Alvisi, and M.~Dahlin, ``Separating
  agreement from execution for byzantine fault tolerant services,'' in
  \emph{Proceedings of the nineteenth ACM symposium on Operating systems
  principles}, 2003, pp. 253--267.

\bibitem{CRISTIAN1995158}
F.~Cristian, H.~Aghili, R.~Strong, and D.~Dolev, ``Atomic broadcast: From
  simple message diffusion to byzantine agreement,'' \emph{Information and
  Computation}, vol. 118, no.~1, pp. 158 -- 179, 1995.

\bibitem{Dunkels}
A.~{Dunkels}, B.~{Gronvall}, and T.~{Voigt}, ``Contiki - a lightweight and
  flexible operating system for tiny networked sensors,'' in \emph{Proceedings
  of the IEEE International Conference on Local Computer Networks}, 2004, pp.
  455--462.

\bibitem{skymote}
\BIBentryALTinterwordspacing
{Moteiv Corporation}, \emph{Tmote Sky}. [Online]. Available:
  \url{https://insense.cs.st-andrews.ac.uk/files/2013/04/tmote-sky-datasheet.pdf}
\BIBentrySTDinterwordspacing

\bibitem{msp430}
\BIBentryALTinterwordspacing
{Texas Instruments}, \emph{MSP430 Microcontroller}. [Online]. Available:
  \url{https://www.ti.com/product/MSP430F1611}
\BIBentrySTDinterwordspacing

\bibitem{cc2420}
\BIBentryALTinterwordspacing
{ Texas Instruments}, \emph{CC2420 RF Tranceiver}. [Online]. Available:
  \url{https://www.ti.com/product/CC2420}
\BIBentrySTDinterwordspacing

\bibitem{indriya}
P.~Appavoo, E.~K. William, M.~C. Chan, and M.~Mohammad, ``Indriya2: A
  heterogeneous wireless sensor network (wsn) testbed,'' in \emph{Testbeds and
  Research Infrastructures for the Development of Networks and
  Communities}.\hskip 1em plus 0.5em minus 0.4em\relax Springer International
  Publishing, 2019, pp. 3--19.

\bibitem{PANIGRAHI2017124}
N.~Panigrahi and P.~M. Khilar, ``Multi-hop consensus time synchronization
  algorithm for sparse wireless sensor network: A distributed constraint-based
  dynamic programming approach,'' \emph{Ad Hoc Networks}, vol.~61, pp. 124 --
  138, 2017.

\bibitem{10.1145/1380564.1380571}
S.~Ganeriwal, C.~P\"{o}pper, S.~\v{C}apkun, and M.~B. Srivastava, ``Secure time
  synchronization in sensor networks,'' \emph{ACM Transactions on Information
  and System Security}, 2008.

\bibitem{1566581}
{Qun Li} and D.~{Rus}, ``Global clock synchronization in sensor networks,''
  \emph{IEEE Transactions on Computers}, pp. 214--226, 2006.

\bibitem{8482304}
K.~{Fan}, S.~{Wang}, Y.~{Ren}, K.~{Yang}, Z.~{Yan}, H.~{Li}, and Y.~{Yang},
  ``Blockchain-based secure time protection scheme in iot,'' \emph{IEEE
  Internet of Things Journal}, pp. 4671--4679, 2019.

\bibitem{9191420}
E.~{Regnath}, N.~{Shivaraman}, S.~{Shreejith}, A.~{Easwaran}, and
  S.~{Steinhorst}, ``Blockchain, what time is it? trustless datetime
  synchronization for iot,'' in \emph{2020 International Conference on
  Omni-layer Intelligent Systems (COINS)}, pp. 1--6.

\bibitem{258951}
Y.~Li, G.~Kumar, H.~Hariharan, H.~Wassel, P.~Hochschild, D.~Platt, S.~Sabato,
  M.~Yu, N.~Dukkipati, P.~Chandra, and A.~Vahdat, ``Sundial: Fault-tolerant
  clock synchronization for datacenters,'' in \emph{14th {USENIX} Symposium on
  Operating Systems Design and Implementation ({OSDI} 20)}, 2020, pp.
  1171--1186.

\bibitem{4509612}
X.~{Hu}, T.~{Park}, and K.~G. {Shin}, ``Attack-tolerant time-synchronization in
  wireless sensor networks,'' in \emph{IEEE INFOCOM 2008 - The 27th Conference
  on Computer Communications}, pp. 41--45.

\bibitem{5345607}
J.~{Barnickel} and U.~{Meyer}, ``Secsywise: A secure time synchronization
  scheme in wireless sensor networks,'' in \emph{2009 International Conference
  on Ultra Modern Telecommunications Workshops}, pp. 1--8.

\bibitem{9120376}
``Ieee standard for a precision clock synchronization protocol for networked
  measurement and control systems,'' \emph{IEEE Std 1588-2019 (Revision ofIEEE
  Std 1588-2008)}, pp. 1--499, 2020.

\bibitem{Elson}
J.~{Elson}, L.~{Girod}, and D.~{Estrin}, ``Fine-grained network time
  synchronization using reference broadcasts,'' \emph{SIGOPS Operating Systems
  Review}, vol.~36, no.~SI, pp. 147--163, 2002.

\bibitem{8277300}
F.~{Terraneo}, F.~{Riccardi}, and A.~{Leva}, ``Jitter-compensated vht and its
  application to wsn clock synchronization,'' in \emph{IEEE Real-Time Systems
  Symposium (RTSS)}, 2017, pp. 277--286.

\bibitem{6728881}
M.~Buevich, N.~Rajagopal, and A.~Rowe, ``Hardware assisted clock
  synchronization for real-time sensor networks,'' in \emph{IEEE Real-Time
  Systems Symposium (RTSS)}, 2013, pp. 268--277.

\bibitem{926982}
W.~Heinzelman, A.~Chandrakasan, and H.~Balakrishnan, ``Energy-efficient
  communication protocol for wireless microsensor networks,'' in
  \emph{Proceedings of the 33rd Annual Hawaii International Conference on
  System Sciences}, 2000, pp. 10 pp. vol.2--.

\bibitem{1333204}
R.~{Olfati-Saber} and R.~M. {Murray}, ``Consensus problems in networks of
  agents with switching topology and time-delays,'' \emph{IEEE Transactions on
  Automatic Control}, vol.~49, no.~9, pp. 1520--1533, 2004.

\bibitem{SCHENATO20111878}
L.~Schenato and F.~Fiorentin, ``Average timesynch: A consensus-based protocol
  for clock synchronization in wireless sensor networks,'' \emph{Automatica},
  vol.~47, no.~9, pp. 1878 -- 1886, 2011.

\bibitem{Xie2018133}
K.~Xie, Q.~Cai, and M.~Fu, ``A fast clock synchronization algorithm for
  wireless sensor networks,'' \emph{Automatica}, vol.~92, pp. 133 -- 142, 2018.

\end{thebibliography}

\end{document}